\documentclass{ifacconf}

\usepackage{graphicx,xcolor}      
\usepackage{natbib}        
\usepackage{amsmath,amsfonts}
\usepackage{enumitem} 

\usepackage{subcaption}

\begin{document}
\begin{frontmatter}

\title{Optimum adaptation of a Steiner network\thanksref{footnoteinfo}} 

\thanks[footnoteinfo]{M. Ye was supported by the Australian Government through the Australian Research Council (DE250100199).}

\author[First]{Manou Rosenberg} 
\author[Second]{Mengbin Ye} 
\author[Third]{Brian D.O. Anderson}

\address[First]{Centre for Optimisation and Decision Science, Curtin University, Perth, Australia (e-mail: manou.rosenberg@curtin.edu.au).}
\address[Second]{Adelaide Data Science Centre, Adelaide University, Adelaide, Australia (e-mail: ben.ye@adelaide.edu.au)}
\address[Third]{School of Engineering, Australian National University, Canberra, Australia (e-mail: brian.anderson@anu.edu.au)}

\begin{abstract}                
The Euclidean Steiner tree problem, normally posed in two dimensions,  seeks to connect a set of prescribed terminal nodes by placing additional nodes, known as Steiner points, with edges connecting such nodes either to another Steiner point or a terminal node, and  with the placements minimising the sum of all the edge lengths of the associated tree. We consider a problem in which we start with a known solution to a Steiner tree problem, and the terminal positions are then perturbed. A first-order approximation theorem is established for efficiently updating the Steiner point positions to recover a Steiner tree solution after the perturbations to terminal nodes. Numerical examples illustrate the effectiveness of our approach (including  a stepwise application for large perturbations) as well as its limitations. 
\end{abstract}

\begin{keyword}
Large-scale complex systems, Large-scale and networked optimization problems,
Steiner tree, Network Optimisation, Perturbed Networks
\end{keyword}

\end{frontmatter}

\section{Introduction}\label{sec:intro}
The Euclidean Steiner tree (EST) problem, in its classical form, seeks the shortest possible network interconnecting a given finite set of points in $\mathbb{R}^2$ (termed \textit{terminals}), allowing for the inclusion of additional points (termed \textit{Steiner} points) at optimal locations to reduce its total length. (The network length is defined as the sum of the edge lengths, i.e. the sum of the distances between every connected pair of points). Historically, this optimisation problem has its roots in the early 19th century and was proposed by~\cite{Gergonne1811}. In the Euclidean version the total length is measured by the sum of Euclidean edge lengths and a solution to such a problem instance, i.e. the set of terminals and their locations, along with the edge set, is called Euclidean Steiner Minimum Tree (ESMT). While the problem has been shown to be NP-hard \citep{Garey1977}, it has garnered significant interest owing to its mathematical structure and its applicability across a range of domains. Classical results, such as Gilbert and Pollak’s conjecture \citep{Gilbert1968} on the ratio of lengths of Steiner trees to minimum spanning trees, and early exact algorithms based on geometric constructions and branch-and-bound approaches \citep{hwang1992steiner}, have laid the foundation for both theoretical study and algorithmic development.  Several heuristics and approximation algorithms have been developed to find near-optimal solutions efficiently (see e.g.~\cite{zachariasen1999concatenation}).

ESMTs have extensive real-world relevance and can be applied to various network design problems. Classical applications arise in circuit layout, wireless communication, and transportation planning. More recently, ESMTs have been used in the planning, design and analysis processes of emerging domains  that involve networked systems such as  electricity network design \citep{rosenberg2022finding}, microprocessor layout \citep{lin2019construction}, modelling biomolecular structures, \cite{mondaini2008steiner}), hydrogen pipeline networks \citep{Hammond2025}, and the planning of robust and cost-efficient infrastructure networks in urban environments \citep{laha2023steiner}. 
 In many real-world scenarios an optimal network design needs to respect further constraints. Recent research has therefore expanded both problem formulations and solution methods for the Steiner problem. In obstacle-aware problems, the connections between points must navigate around inaccessible regions (hard obstacles) or traverse regions with increased costs per unit length (soft obstacles). \cite{garrote2019weighted} developed a deterministic heuristic approach. Their algorithm is applied to telecommunication networks, where soft obstacles represent disaster-prone areas that are exposed to a higher risk of natural disasters such as hurricanes, floods or earthquakes. \cite{rosenberg2021genetic} developed a genetic algorithm approach for the Steiner tree problem with solid and soft obstacles.  Another area of application is the planning of electricity networks, for example. In this case, soft obstacles could represent higher risk or altitude zones, private properties, difficult terrain, lakes, or protected areas~\citep{rosenberg2022finding}.

In this paper, we explore the methodology of recovering a  Euclidean Steiner minimum tree (without obstacles) after perturbation in the terminal nodes. The motivation for considering perturbations is twofold. First, we seek  to develop a modelling framework that considers practical conditions where input points are uncertain or where network requirements change after finding an initial minimal-length network design. In urban communication networks, the locations of service points, e.g., utility stations, routers, data centers, might change due to new regulations, construction projects, or better land use. In a power distribution network,  the location of transformers or other equipment might shift upon deployment due to relocations or upgrades. In sensor networks used for environmental monitoring or industrial automation, the locations of sensors (e.g., temperature sensors, motion detectors) might require adjustment  to accommodate operational needs, equipment updates, or to improve coverage.
The second aim is to establish a theoretical foundation and demonstrate its effectiveness on problems without obstacles, with the objective of extending it to the substantially more challenging scenarios involving soft obstacles, where recalculating Steiner minimum tree solutions is computationally challenging. In addition to the terminal nodes and the Steiner nodes, the network's entry/exit points at soft obstacle boundaries constitute a further class of points; we conjecture that the positions of these points, after perturbation of the terminal nodes, can be determined by extending the methods of this paper. 

Despite their practical importance, systematic approaches to the perturbed ESMT problem are scarce. The existing literature predominantly focuses on classical or static problem instances, leaving a gap in the study of network optimality under perturbations. To the best of the authors' knowledge, there is no direct literature on the perturbed Euclidean Steiner tree problem as described in this paper. However, there are some related works that consider the insertion and deletion of terminal nodes or consider approximation algorithms in general. \cite{chapeau} proposed an approach that incrementally updates the tree structure when adding or removing nodes, maintaining a near-optimal solution without requiring full recomputation. \cite{chan2024fully} presented a dynamic probabilistic approximation algorithm for the Euclidean Steiner tree problem in $d$ dimensions, which is capable of maintaining a near-optimal $(1 + \varepsilon)$-approximate solution as terminals are inserted or deleted. \cite{raikwar2025dynamic} address the Steiner tree problem in graphs and propose a method to update the Steiner Minimum Tree in graphs after changes to the underlying nodes and edges.

The key contributions of this paper are the formulation of the Euclidean Steiner tree problem with perturbed terminal nodes, and the development of a first-order approximation theorem for recovery of a Steiner tree solution after perturbation. After small changes in the position of some or all terminals, the proposed approach approximates the updated positions for the Steiner points. This method requires the assumption that the  perturbation is sufficiently small such that the topology of the Steiner minimum tree does not change.
The method requires a Hessian matrix to be invertible, and we analytically prove that it is always invertible. This gives a theoretical foundation for the proposed approximated Steiner tree solution, and going further we present empirical examples to test the stepwise application of the approximation theorem to a large terminal node position change.

The remainder of this paper is organised as follows. Section 2 introduces mathematical preliminaries and recent literature on ESMTs. Section 3 presents the main results including the first-order approximation theorem for the perturbed ESMT. Section 4 gives numerical examples showing the application of the approximation theorem and its limitations, and Section 5 concludes the paper and gives an outlook on future contributions, for which this work is the precursor.

\section{Background and Problem Formulation}\label{sec:background}

In this section, we introduce the classical Euclidean Steiner Tree (EST) problem and recall known properties of a solution to this problem, termed a Euclidean Steiner minimum tree (ESMT). Then, we formulate our problem of interest, which involves a perturbation\footnote{While \textit{perturbation} typically refers to a small change, in this paper we will use this term to also denote a larger change in the terminals that can be achieved by a sequence of small changes.} to the nodes of the EST problem, and provide motivating arguments to its importance.

\subsection{Euclidean Steiner Minimum Trees}

In this paper, we consider graphs $\mathcal T = (\mathcal V,\mathcal E)$ that are embedded in the plane, i.e. $\mathbb R^2$. Thus, for the set of nodes $\mathcal V = \{v_1, \hdots, v_m\}$, we associate with node $v_i$ the position vector $p_{v_i} = [x_{v_i}, y_{v_i}]^\top$. An edge $(i,j) \in \mathcal E$ has the associated Euclidean distance $\Vert p_{v_i} - p_{v_j}\Vert$.

Let $T = \{t_1, \hdots, t_n\}$ be a set of $n\geq 2$ nodes, which we call \textit{terminals}. Consistent with the above, each node $t_i$ is located in $\mathbb R^2$, and $p_{t_i} = [x_{t_i}, y_{t_i}]^\top$ denotes its position.

\begin{prob}[Euclidean Steiner Tree problem]\label{prob:EST}
    For a given set of terminals $T$, find a connected graph $\mathcal T = (\mathcal V, \mathcal E)$ such that i) $T\subseteq \mathcal V$, and ii) the total distance --obtained by summing the distance over the edges-- is minimised.
\end{prob}

The resulting network $\mathcal T$ is in fact a tree, and may contain additional nodes, i.e. $\mathcal V = T \cup S$. In more detail, the set of nodes $S = \{s_1,\hdots, s_k\}$ are called \textit{Steiner} nodes, with $s_j$ having the position vector $p_{s_j} = [x_{s_j}, y_{s_j}]^\top$. Given a set of terminals $T$, there can be multiple  graphs that solve Problem~\ref{prob:EST}, and each graph is called a \textit{Euclidean Steiner minimal tree} (ESMT). Any ESMT $\mathcal T$ has the following properties:
\begin{enumerate}[label=P\arabic*]
    \item There are at most $n-2$ Steiner points. That is, $k\leq n-2$.\label{numbercon}
    \item\label{degreecon}  Each terminal is connected to at most three other nodes and each Steiner point is  adjacent to exactly three other nodes (\textit{degree condition}).
    \item\label{anglecon}  Each pair of edges in the tree meets at an angle of at least $2\pi /3$ and the three edges at a Steiner point meet exactly at an angle of $2\pi /3$ (\textit{angle condition}).
\end{enumerate}
For convenience, we partition $\mathcal{E}$ into edges connecting terminal nodes to terminal nodes $\mathcal{E}_{T}$, edges connecting terminals and Steiner nodes $\mathcal{E}_{T,S}$, and those connecting Steiner nodes to Steiner nodes  $\mathcal{E}_{S}$. Note that any of the aforementioned edge sets can be empty depending on the EMST solution, and there holds $\mathcal{E} = \mathcal{E}_{T}\cup \mathcal{E}_{T,S}\cup \mathcal{E}_{S}$.

As noted in Section~\ref{sec:intro}, the EST problem is proven to be NP-hard \citep{Garey1977}, and thus heuristic algorithms and approximation schemes generally provide the best method for identifying candidate ESMTs when there are a large number of nodes.  
Worst-case time complexity for exact algorithms solving the Euclidean Steiner Tree Problem is exponential, typically $\mathcal{O}(2^n)$, where $n$ is the number of terminals \citep{hwang1992steiner}.
 For specific circumstances, where the number of Steiner points and the edge connections are known, \cite{melzak1961problem} and \cite{HWANG1986235} have developed an exact algorithm for the EST problem that runs in linear time. The proposed approximation theorem of this paper works under the same assumptions, but importantly, our approach can be extended to more complex cases including obstacle-aware Steiner tree problems.

\subsection{Perturbed Euclidean Steiner Tree Problem}\label{ssec:problem_definition}

Here, we shall define a \textit{perturbed Euclidean Steiner Tree problem} (perturbed EST problem), which is the main focus of this paper. 

Roughly speaking, we are interested in the scenario in which we are given a set of terminals $T$, and we have already obtained an ESMT $\mathcal{T} = (T\cup S, \mathcal E)$ that solves the EST problem for $T$, using a set of Steiner points $S$. Then, we suppose the terminal positions are perturbed by a small amount, and we seek an algorithm that can rapidly compute the updated locations of the Steiner points $S$ \textit{without having to re-solve from scratch} the EST problem using the perturbed terminal positions. 

 For a given set of terminal nodes $T$, let $S$ be a set of Steiner points such that $\mathcal{T} = (T\cup S,\mathcal E)$ is an ESMT that solves Problem~\ref{prob:EST}. Let us define the column vectors $s\in \mathbb R^{2k}$ and $t \in\mathbb R^{2n}$ for the Steiner node locations and terminal node locations, respectively, as 
\begin{subequations}
\begin{align}
    s & = \begin{bmatrix}\label{eq:s_vector}
        x_{s_1} & y_{s_1} & x_{s_2} & y_{s_2} & \cdots & x_{s_k} & y_{s_k} 
    \end{bmatrix}^\top, \\
    t & = \begin{bmatrix}
        x_{t_1} & y_{t_1} & x_{t_2} & y_{t_2} & \cdots & x_{t_n} & y_{t_n} 
    \end{bmatrix}^\top. \label{eq:t_vector}
\end{align}
\end{subequations}
Note $s= [p_{s_1}^\top \ p_{s_2}^\top \ \cdots \ p_{s_k}^\top]^\top$ and $t = [p_{t_1}^\top \ p_{t_2}^\top \  \cdots \ p_{t_n}^\top]^\top$. Let
\begin{equation}\label{eq:delta_t}
    \delta t = \begin{bmatrix}
        \delta x_{t_1} & \delta y_{t_1} &\delta x_{t_2} & \delta y_{t_2} & \cdots & \delta x_{t_n} & \delta y_{t_n} 
    \end{bmatrix}^\top,
\end{equation}
be a perturbation vector, and suppose the terminals are perturbed so that the new terminal node positions are:
\begin{equation}
     t + \delta t .
\end{equation}

We now want to find a perturbed ESMT solution ${\bar{\mathcal{T}}}_{t + \delta t}$. In other words, we seek new locations for the Steiner nodes 
\begin{equation}
    s + \delta s,
\end{equation}
given the shifted terminal positions $t +\delta t$. Here,
\begin{equation}\label{eq:delta_s}
    \delta s = \begin{bmatrix}
        \delta x_{s_1} & \delta y_{s_1} & \delta x_{s_2} & \delta y_{s_2} & \cdots & \delta x_{s_k} & \delta y_{s_k} 
    \end{bmatrix}^\top,
\end{equation}
gives the shift in the Steiner node locations. More formally, we will define the problem as follows.

\begin{prob}[Perturbed Euclidean Steiner Tree Problem]\label{prob:no_obstabcles}
    With notation as above, let $S$ be a set of Steiner nodes which, given a set of terminal nodes $T$, solves the EST problem by forming the ESMT $\mathcal T$. Suppose the terminal nodes are shifted by the perturbation vector $\delta t$, to take new positions $t + \delta t$. Find the perturbation vector $\delta s$ such that the tree ${\bar{\mathcal{T}}}_{t + \delta t}$ is a perturbed ESMT with the terminal node positions $t + \delta t$ and Steiner node positions $s+ \delta s$ . In particular, find a matrix $X$ such that $\delta s=X\delta t+o(\| \delta t\|)$.
\end{prob}

 For a small perturbation, of course one can simply take $\delta s=X\delta t$. For a large perturbation, one can break up the perturbation $\delta t$ into a sum of smaller perturbations $\delta t_i, i=1,2,\dots,K$ and apply the first order approximation formula with $X$ determined successively by $t, t+\delta t_1, t+\delta t_1+\delta t_2,\dots$. This is a direct parallel of the idea of solving a differential equation by using a sequence of updates where each update is an Euler approximation. Just as in the solving of a differential equation, it is important that the successive steps are of appropriate size, and the determination of that size is not totally trivial.

There is a further parallel with what happens with differential equations. In the event of a large perturbation in the set $T$, the actual EMST graph for the perturbed problem may differ from that for the original problem. Trying to find it via a series of perturbations starting with the original problem (each one of which leaves unchanged the associated graph) will likely involve a sequence of $X$ matrices whose entries diverge somewhere within the perturbation process.  The differential equation analog to this is the phenomenon of a finite escape time.  

\section{Main Results}\label{sec:main_results}

In this section, we provide the main results that address Problem~\ref{prob:no_obstabcles}. We first present a cost function, which along with its partial derivatives, form the basis of our algorithm. We then prove nonsingularity of a key Hessian used in the algorithm, before providing the algorithm for computing $\delta s$ and show that it provides a first-order approximation of the new ESMT. 

\subsection{Cost Function and Associated Partial Derivatives}\label{ssec:cost_calcs}

To begin, with minor abuse of notation, let $\bar T$ and $\bar S$ be a specific realisation of an ESMT $\mathcal T$. Consider the cost function $J\colon \mathbb{R}^{2n} \times \mathbb{R}^{2k} \rightarrow \mathbb{R}$, associated with $\mathcal T$, which calculates the total distance of a tree as follows:
\begin{align}\label{eq:generalJ}
J(t,s) &= \sum_{(i,j) \in\mathcal{E}_{T}}\sqrt{(x_{t_i} - x_{t_j})^2 +(y_{t_i} - y_{t_j})^2} \nonumber \\
&+  \sum_{(i,j) \in\mathcal{E}_{T,S}}\sqrt{(x_{t_i} - x_{s_j})^2 +(y_{t_i} - y_{s_j})^2} \\
&+ \sum_{(i,j) \in\mathcal{E}_{S}}\sqrt{(x_{s_i} - x_{s_j})^2 +(y_{s_i} - y_{s_j})^2} . \nonumber
\end{align}
Because $\bar S$ is minimising for a prescribed $\bar T$, $\min_s J(\bar t,s)$ is achieved at $s=\bar s$.  As it turns out, we can obtain an approximate solution to the problem in Problem~\ref{prob:no_obstabcles} using \eqref{eq:generalJ} and associated partial derivatives and the minimization property. Before doing so, we first present calculations for two partial derivatives of $J$ that will be needed: $\frac{\partial^2 J}{\partial s^2}$ (which we term the Hessian) and $\frac{\partial^2 J}{\partial t\partial s}$.

We begin with the Hessian.
First, we compute $\frac{\partial J}{\partial s} \in \mathbb R^{2k}$ as
\begin{equation}
    \frac{\partial J}{\partial s} = \begin{bmatrix}(\frac{\partial J}{\partial p_{s_1}})^\top & (\frac{\partial J}{\partial p_{s_2}})^\top & \cdots & (\frac{\partial J}{\partial p_{s_k}})^\top \end{bmatrix}^\top.
\end{equation}
which consist of the $2$-vectors
\begin{equation}\label{eq:partialJs_general}
    \frac{\partial J}{\partial p_{s_i}} = \sum_{(j,i)\in\mathcal E_{T,S}} \frac{p_{s_i} - p_{t_j}}{\Vert p_{t_j} - p_{s_i}\Vert} + \sum_{(\ell, i)\in\mathcal E_{S} } \frac{p_{s_i} - p_{s_{\ell}}}{\Vert p_{s_{\ell}} - p_{s_i}\Vert},
\end{equation}
for all $s_i \in S$. Note that we use different indices $j$ and $\ell$ for terminals and other Steiner nodes for clarity. Due to Property~\ref{degreecon}, there are precisely three summands in \eqref{eq:partialJs_general}.

Next, we calculate $\frac{\partial^2 J}{\partial s^2} = \frac{\partial}{\partial s} \left(\frac{\partial J}{\partial s}\right) \in \mathbb R^{2k\times 2k}$ to be
\begin{equation}\label{eq:partial2Js2_general}
    \frac{\partial^2 J}{\partial s^2} = \begin{bmatrix}
        J_{11} & J_{12} & \cdots & J_{1k} \\
        J_{21} & J_{22} & \cdots & J_{2k} \\
        \vdots & \vdots & \ddots & \vdots \\
        J_{k1} & J_{k2} & \cdots & J_{kk}.
    \end{bmatrix} 
\end{equation}
The block matrices $J_{i\ell}$ in \eqref{eq:partial2Js2_general}, for $i,\ell\in \{1,2,\hdots,k\}$, take on specific forms. The diagonal blocks are
\begin{align}\label{eq:Jii_block}
    J_{ii} & = \sum_{(j,i)\in\mathcal E_{T,S}} \frac{1}{\Vert p_{t_j} - p_{s_i}\Vert} \left(I_2 - \frac{(p_{s_i} - p_{t_j})(p_{s_i} - p_{t_j})^\top}{\Vert p_{t_j} - p_{s_i} \Vert^2}\right) \nonumber \\
    & \; + \sum_{(\ell,i)\in\mathcal E_{S}} \frac{1}{\Vert p_{s_\ell} - p_{s_i}\Vert} \left(I_2 - \frac{(p_{s_i} - p_{s_\ell})(p_{s_i} - p_{s_\ell})^\top}{\Vert p_{s_\ell} - p_{s_i} \Vert^2}\right),
\end{align}
where $I_2$ is the $2$-dimensional identity matrix. Due to Property~\ref{degreecon}, there are precisely three summands in \eqref{eq:Jii_block}. For $i\neq \ell$, the off-diagonal block $J_{i\ell}$ is the null matrix if $s_i$ is not connected to $s_\ell$ (i.e. $(i,\ell) \notin \mathcal E_{S}$), then $J_{i\ell} = 0_{2\times 2}$. This is because \eqref{eq:partialJs_general} involves summing over just those edges involving $s_i$. However, if $s_i$ and $s_\ell$ are connected, then
\begin{equation}\label{eq:Jij_block}
    J_{i\ell} = -\frac{1}{\Vert p_{s_\ell} - p_{s_i}\Vert} \left(I_2 - \frac{(p_{s_i}-p_{s_\ell})(p_{s_i}-p_{s_\ell})^\top}{\Vert p_{s_\ell} - p_{s_i}\Vert^2}\right).
\end{equation}

Now, we compute $\frac{\partial^2 J}{\partial t \partial s}\in \mathbb R^{2k\times 2n}$. Similarly to above, we can write this as a block matrix of the form
\begin{equation} \label{eq:partial2Jts}
    \frac{\partial^2 J}{\partial t \partial s} = \begin{bmatrix}
        \tilde{J}_{11} & \tilde{J}_{12} & \cdots & \tilde{J}_{1n} \\
        \tilde{J}_{21} & \tilde{J}_{22} & \cdots & \tilde{J}_{2n} \\
        \vdots & \vdots & \ddots & \vdots \\
        \tilde{J}_{k1} & \tilde{J}_{k2} & \cdots & \tilde{J}_{kn}
    \end{bmatrix},
\end{equation}
with the block $ \tilde J_{ij}=\frac{\partial^2 J}{\partial s_i\partial t_j}$ evaluating to be $\tilde J_{ij} = 0_{2\times 2}$ if $t_j$ and $s_i$ are not connected by an edge, and
\begin{equation}
    \tilde{J}_{ij} = -\frac{1}{\Vert p_{t_j} - p_{s_i} \Vert} \left(I_2 - \frac{(p_{s_i} - p_{t_j})(p_{s_i} - p_{t_j})^\top}{\Vert p_{t_j} - p_{s_i}\Vert^2}\right)
\end{equation}
if $t_j$ and $s_i$ are connected. Note that Property~\ref{degreecon} implies that each block row in \eqref{eq:partial2Jts} has at most three nonzero blocks.

Finally, we present a way to decompose the Hessian into summands that are indexed over the edges of $\mathcal T$, and our subsequent proof of the invertibility of the Hessian relies on properties of the summands. Observe that
\begin{equation}\label{eq:KLsum}
     \frac{\partial^2J}{\partial s^2}=\sum_{(j,i)\in\mathcal E_{T,S}}K_{ji}+\sum_{(m,\ell)\in\mathcal E_S}L_{m\ell},
 \end{equation}
and each summand corresponds to an edge in either $\mathcal E_{T,S}$ or $\mathcal E_S$. More precisely, the $K_{ji}$ terms correspond to edges between a terminal~$t_j$ and a Steiner point $s_i$, while the $L_{m\ell}$ terms correspond to edges between two Steiner points $s_m$ and $s_\ell$.

A summand $K_{ji}$, with $(j,i)\in \mathcal E_{T,S}$, is a $k\times k$ block matrix with each block a $2\times 2$ matrix, and the only nonzero block is the $ii$th block, which we denote by $K_{ji}^{(ii)}$. This block is 
 \begin{equation}\label{eq:K_ii}
     K^{(ii)}_{ji}=\frac{1}{\Vert p_{t_j} - p_{s_i}\Vert} \left(I_2 - \frac{(p_{s_i} - p_{t_j})(p_{s_i} - p_{t_j})^\top}{\Vert p_{t_j} - p_{s_i} \Vert^2}\right) .
 \end{equation}

A summand $L_{m\ell}$ with $(m,\ell)\in\mathcal E_S$ is a $k\times k$ block matrix with each block a $2\times 2$ matrix, and with the only nonzero blocks being the $mm, m\ell, \ell m$th and $\ell\ell$th blocks, denoted by $L^{(mm)}_{m\ell}, L^{(m\ell)}_{m\ell},L^{(\ell m)}_{m\ell}$ and $L^{(\ell\ell)}_{m\ell}$. In fact, if we define
\begin{equation}\label{eq:A_ml}
    A_{m\ell} = \frac{1}{\Vert p_{s_\ell} - p_{s_m}\Vert} \left(I_2 - \frac{(p_{s_m} - p_{s_\ell})(p_{s_m} - p_{s_\ell})^\top}{\Vert p_{s_\ell} - p_{s_m} \Vert^2}\right),
\end{equation}
then $L^{(mm)}_{m\ell} = L^{(\ell\ell)}_{m\ell} = A_{m\ell}$, and $L^{(m\ell)}_{m\ell} = L^{(\ell m)}_{m\ell} = -A_{m\ell}$.
Thus, $L_{m\ell}$ is a $k\times k$ block matrix which, after permutation operations are applied to both the rows and columns, is the direct sum of a $2\times 2$ matrix and a zero matrix of appropriate size. More explicitly, with $\Sigma_{m\ell}$ being the associated permutation matrix, we have
\begin{equation}\label{eq:L_ml_permuted}
    \Sigma_{m\ell} L_{m\ell}\Sigma_{m\ell}^{\top}=\begin{bmatrix}L^{(mm)}_{m\ell}&L^{(m\ell)}_{m\ell}&0&\dots&0\\
    L^{(\ell m)}_{m\ell}&L^{(\ell\ell)}_{m\ell}&0&\dots&0\\0&0&0&\dots&0\\
    \vdots&\vdots&\vdots& \ddots &\vdots\\
    0&0&0&\dots&0
    \end{bmatrix},
\end{equation}

\subsection{First-Order Algorithm}

Before we present the update algorithm, we demonstrate that the Hessian in \eqref{eq:partial2Js2_general}, evaluated at a specific $\bar T$ and $\bar S$ associated with an ESMT $\mathcal T$, is invertible. This property is key for the main approximation algorithm. In order to establish invertibility of the Hessian, we record a key property of the block matrices that form it.

\begin{lem}\label{lem:building_block}
    Let $a,b\in \mathbb{R}^2\setminus \{ 0 \}$. A matrix of the form 
\[
\frac{1}{\Vert b \Vert} \left ( I_2 - \frac{b b^\top}{\Vert b \Vert ^2} \right )
\]
is non-negative definite and has rank $1$; its nullspace consists of all scalar multiples of $b$. If $a$ and $b$ are linearly independent, then 
\[
\frac{1}{\Vert a \Vert} \left ( I_2 - \frac{a a^\top}{\Vert a \Vert ^2} \right )  +
\frac{1}{\Vert b \Vert} \left ( I_2 - \frac{b b^\top}{\Vert b \Vert ^2} \right )
\]
is positive definite.
\end{lem}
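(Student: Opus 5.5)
The plan is to work directly with the quadratic form of each matrix. For $b\neq 0$ write $P_b = I_2 - bb^\top/\Vert b\Vert^2$. First we check that $P_b$ is the orthogonal projector onto $b^\perp$: it is symmetric and idempotent, since $P_b^2 = I_2 - 2bb^\top/\Vert b\Vert^2 + b(b^\top b)b^\top/\Vert b\Vert^4 = P_b$. Then for any $x\in\mathbb{R}^2$,
\[
x^\top \tfrac{1}{\Vert b\Vert}P_b\, x = \tfrac{1}{\Vert b\Vert}\Vert P_b x\Vert^2 = \tfrac{1}{\Vert b\Vert}\Bigl(\Vert x\Vert^2 - \tfrac{(b^\top x)^2}{\Vert b\Vert^2}\Bigr)\ge 0
\]
by Cauchy--Schwarz, so the matrix is non-negative definite.

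For the nullspace, note that $P_b x = 0$ holds if and only if $x = b(b^\top x)/\Vert b\Vert^2$, that is, if and only if $x$ is a scalar multiple of $b$. Conversely, $P_b b = b - b = 0$. The nullspace is therefore exactly $\mathrm{span}\{b\}$, which is one-dimensional, and since we are in $\mathbb{R}^2$ the rank is $2-1=1$. As a check, any nonzero $b^\perp$ vector $c$ satisfies $P_b c = c\neq 0$.

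For the sum, let $M = \tfrac{1}{\Vert a\Vert}P_a + \tfrac{1}{\Vert b\Vert}P_b$. Being a sum of two non-negative definite symmetric matrices, $M$ is non-negative definite, so it only remains to rule out a nonzero $x$ with $x^\top M x = 0$. If $x^\top M x = 0$, both quadratic forms vanish because each is non-negative. For a symmetric non-negative definite matrix $Q$, $x^\top Q x = 0$ forces $Qx=0$; here this is immediate from $x^\top P_b x = \Vert P_b x\Vert^2$. Hence $x\in\mathrm{span}\{a\}\cap\mathrm{span}\{b\}$. Linear independence of $a$ and $b$ makes this intersection $\{0\}$, so $x=0$ and $M$ is positive definite.

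There is no real obstacle here. The only step needing care is passing from "quadratic form equals zero" to "vector lies in the nullspace", and the projector identity $x^\top P_b x = \Vert P_b x\Vert^2$ gives that directly. Equality in Cauchy--Schwarz would serve equally well.
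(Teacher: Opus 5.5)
Your proof is correct and follows essentially the same route as the paper. You identify $I_2 - bb^\top/\Vert b\Vert^2$ as the orthogonal projector onto $b^\perp$ (verified directly, where the paper cites standard facts from Bernstein), and then, as the paper does, you force any vector annihilated by the sum to lie in $\mathrm{span}\{a\}\cap\mathrm{span}\{b\}=\{0\}$, making explicit the step the paper leaves implicit: a zero of the quadratic form of a non-negative definite sum is a nullvector of each summand.
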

\begin{pf}
    The statement regarding the first matrix can be directly obtained from Fact~$3.13.4$, $3.13.7$ and~$3.13.9$ in \citep{bernstein2009matrix}, noting that $I_2 - \frac{b b^\top}{\Vert b \Vert ^2}$ is a projection matrix of rank $n-1$. The statement regarding the second matrix is established as follows. Since both summands are nonnegative definite, any $v$ that is a nullvector must be, separately, a nullvector of each summand. This means that $v$ needs to be a multiple of $a$ and a multiple of $b$, but the intersection of $\text{span}\{a\}$ and $\text{span}\{ b\}$ contains the zero vector only, as they are not linearly independent. \hfill $\qed$
\end{pf}

The invertibility of the Hessian is presented in the following result.
\begin{thm}\label{theorem:nonsingular}
    Let $\bar T$ be a given set of terminal nodes with position vector $\bar t$, and $\bar S$ be a set of Steiner nodes with position vector $\bar s$, which form an ESMT $\mathcal T$. The matrix 
    \[\frac{\partial^2J}{\partial s^2} (\bar t, \bar s),
    \]
    i.e. the Hessian as defined in~\eqref{eq:partial2Js2_general} and evaluated at $(\bar t, \bar s)$, is a nonsingular matrix. 
\end{thm}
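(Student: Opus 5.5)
Since every summand in \eqref{eq:KLsum} is nonnegative definite, so is the Hessian. Each $K_{ji}$ has this property directly by Lemma~\ref{lem:building_block}. For each $L_{m\ell}$, use the permuted form \eqref{eq:L_ml_permuted}: its nonzero $4\times 4$ block is $\begin{bmatrix}1&-1\\-1&1\end{bmatrix}\otimes A_{m\ell}$, which is the Kronecker product of two nonnegative definite matrices. The Hessian is therefore singular iff some nonzero $v=[v_1^\top\ \cdots\ v_k^\top]^\top$ with $v_i\in\mathbb R^2$ satisfies $v^\top K_{ji}v=0$ and $v^\top L_{m\ell}v=0$ for every edge. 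We aim to show that these conditions force $v=0$.

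Lemma~\ref{lem:building_block} translates the conditions into geometric constraints on $v$. First, $v^\top K_{ji}v=v_i^\top K^{(ii)}_{ji}v_i=0$ forces $v_i\in\operatorname{span}\{p_{s_i}-p_{t_j}\}$. Second, $v^\top L_{m\ell}v=(v_m-v_\ell)^\top A_{m\ell}(v_m-v_\ell)=0$ forces $v_m-v_\ell\in\operatorname{span}\{p_{s_m}-p_{s_\ell}\}$. Interpreted as an infinitesimal motion of the Steiner points with the terminals fixed, $v$ preserves all edge directions to first order.

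The argument then proceeds by induction on the Steiner subtree, which is acyclic. If $k=1$, the Steiner point has three terminal neighbours whose edge directions are pairwise at $2\pi/3$ by Property~\ref{anglecon}, hence pairwise linearly independent. So $v_1$ must lie in the span of two independent vectors, giving $v_1=0$. For general $k$, consider the forest induced on $S$. It has a leaf $s_i$, that is, a Steiner point with at most one Steiner neighbour, or an isolated Steiner point, which is handled as in the case $k=1$. By Property~\ref{degreecon}, $s_i$ has at least two terminal neighbours $t_j,t_{j'}$. The directions $p_{s_i}-p_{t_j}$ and $p_{s_i}-p_{t_{j'}}$ meet at angle $2\pi/3$, so they are independent, and Lemma~\ref{lem:building_block} (the sum $K^{(ii)}_{ji}+K^{(ii)}_{j'i}$ is positive definite) gives $v_i=0$.

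Next, propagate along the tree. If $v_m=0$ and $s_m$ is adjacent to $s_\ell$, then $v_\ell\in\operatorname{span}\{p_{s_\ell}-p_{s_m}\}$. If $s_\ell$ has another neighbour $w$, either a terminal (giving $v_\ell\parallel p_{s_\ell}-p_w$) or a Steiner point already known to have $v_w=0$, then independence of two edge directions at $s_\ell$ gives $v_\ell=0$. To make this ordering rigorous, remove the leaf $s_i$ with $v_i=0$ and treat it as a fixed point, i.e.\ as a pseudo-terminal. Its neighbour $s_\ell$ now has the constraint $v_\ell\parallel p_{s_\ell}-p_{s_i}$, which has exactly the same form as a terminal constraint. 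The reduced configuration satisfies the same hypotheses (each remaining Steiner point has three fixed-or-Steiner neighbours at angles $2\pi/3$), so induction on $k$ finishes the proof.

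The main obstacle is making this reduction clean, in particular checking that the pseudo-terminal constraint fits exactly the form of the $K$ terms. For this I would write out explicitly that, once $v_i=0$, $v^\top L_{i\ell}v=v_\ell^\top A_{i\ell}v_\ell$.

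Positive semidefiniteness with trivial kernel then yields positive definiteness, and in particular nonsingularity.
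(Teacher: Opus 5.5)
Your proof is correct, and it departs from the paper's at exactly the step where the paper's own argument breaks. Both proofs start the same way. Each summand in \eqref{eq:KLsum} is nonnegative definite, so a nullvector of the Hessian is a nullvector of every $K_{ji}$ and every $L_{m\ell}$. A leaf of the Steiner forest $(S,\mathcal E_S)$ has two terminal edges at angle $2\pi/3$, which forces the corresponding $2$-vector of the nullvector to vanish. The paper then block-diagonalises the Hessian by connected components of that forest and claims that $L_{12}a=0$ gives $a_1=a_2$. From this it gets $a_1=\cdots=a_q$ on a component, so a single leaf would suffice. That inference does not hold. By Lemma~\ref{lem:building_block} the block $A_{12}$ has rank one, so $L_{12}a=0$ only gives $A_{12}(a_1-a_2)=0$, i.e.\ $a_1-a_2\in\mathrm{span}\{p_{s_1}-p_{s_2}\}$. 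This is precisely the weaker constraint you write down. Your leaf-peeling induction is what actually propagates the zero through the tree. Once $v_i=0$, the term $v^\top L_{i\ell}v$ becomes $v_\ell^\top A_{i\ell}v_\ell$, which is a $K$-type constraint with $s_i$ acting as a pseudo-terminal. At each stage the current leaf of the remaining forest has at most one remaining Steiner neighbour, hence at least two fixed neighbours whose edge directions are independent by Property~\ref{anglecon}. The paper's component decomposition is harmless but unnecessary, since your induction treats all components at once. Its ``all equal'' shortcut would have reduced the proof to a one-leaf check, but it is false. The only bookkeeping your version needs is to state the induction hypothesis for configurations in which every Steiner point has three fixed-or-Steiner neighbours at mutual angles $2\pi/3$. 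Alternatively, you can observe that relabelling $s_i$ as a terminal leaves a minimum tree for the enlarged terminal set. Its Hessian is exactly the principal submatrix of the original one with the rows and columns of $s_i$ deleted, so the induction can be run on the theorem's own hypothesis.
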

\begin{pf}
    We will exploit the expression of the Hessian as given in \eqref{eq:KLsum}, and present the proof in two parts. In Part~a) we show that each summand in \eqref{eq:KLsum} is itself a non-negative definite symmetric matrix, and in Part~b) we show that the intersection of the nullspaces of all summands is trivial. Note that while the expressions in Section~\ref{ssec:cost_calcs} are for the Hessian evaluated for an arbitrary $t,s$, in this proof, all expressions are evaluated at $\bar t, \bar s$, i.e., for the specific ESMT in question.

\textit{Part a)} Consider $K_{ji}$, with $(j,i)\in \mathcal E_{T,S}$, as described below \eqref{eq:KLsum}. It is a block matrix in which the only nonzero block is the $ii$th block given in \eqref{eq:K_ii}, and this block is nonnegative definite by Lemma~\ref{lem:building_block}. It follows that $K_{ji}$ is a non-negative definite symmetric matrix. Therefore, the first term on the right-hand-side of \eqref{eq:KLsum} is a nonnegative definite matrix, as it is the sum of a set of nonnegative definite matrices.

Next, consider the summand $L_{m\ell}$ with $(m,\ell)\in\mathcal E_S$. As explained above, a permutation can be applied so that the summand reads as in \eqref{eq:L_ml_permuted}, which is a matrix of all zeros except the $4\times 4$ matrix on the upper diagonal:
\begin{align}\label{eq:Ldetail}
 \begin{bmatrix}
L^{(mm)}_{m\ell}&L^{(m\ell)}_{m\ell}\\
L^{(\ell m)}_{m\ell}&L^{(\ell\ell)}_{m\ell}   
\end{bmatrix} = \begin{bmatrix}
    I_2\\-I_2\end{bmatrix}
    A_{m\ell}
    \begin{bmatrix}
     I_2&-I_2
     \end{bmatrix}
\end{align}
The fact that this matrix is non-negative definite is obtained from Lemma~\ref{lem:building_block} and and Fact~$8.11.1$ in \citep{bernstein2009matrix}. Hence we have established that the Hessian is non-negative definite, being a sum of non-negative definite matrices. 

\textit{Part b)} For the original Steiner minimum tree $\mathcal{T} = (T \cup S,\mathcal E = \mathcal E_T  \cup \mathcal E_{T,S}  \cup \mathcal E_S)$, consider the graph induced by $S$. This can be written as $\mathcal{T}^\prime =  (S, \mathcal E_S)$, which defines a forest, a graph in which every connected component is a tree consisting of only Steiner points. Renumber the Steiner nodes so that those in the first connected component are the first occurring; then include those in the second connected component, and so on. If there are $d$ connected components in ${\mathcal T}^\prime$, then with this renumbering, the Hessian in~\eqref{eq:KLsum} can be expressed as
\begin{equation}
    \frac{\partial^2J}{\partial s^2} (\bar t, \bar s) = \begin{bmatrix}
        H_{1} & 0 & \cdots & 0 \\
        0 & H_2 & \cdots & 0 \\
        \vdots & \vdots & \ddots & \vdots \\
        0 & 0 & \cdots & H_d
    \end{bmatrix}.
\end{equation}
That is, the Hessian is a block diagonal matrix, where the matrix $H_r$ is the associated with the $r$th connected component in $T^\prime$ (the size of the $H_r$ can differ depending on the number of nodes in the connected component of the graph).

Now, consider the first connected component in $\mathcal{T}^\prime$, which is in fact a connected tree of $q\geq 1$ Steiner nodes, with nodes indexed as $s_1, \hdots, s_q$. If $[a_1^{\top}\;\dots a_k^{\top}]^{\top}$ is a nullvector of the Hessian, with each $a_i$ being a $2$-vector, then the vector $[a_1^\top\;\hdots\;a_{q}^\top\;0\;0\hdots\;0]^\top$ must be a nullvector of the diagonal block associated with the first connected component. We will show first that i) $a_1 = \hdots = a_q$ (this is trivial in the case of $q=1$), and then ii) that $a_1 = 0$, which proves the block  matrix associated with the first connected component is positive definite. In each connected component of $\mathcal{T}^\prime$, there must be at least one Steiner node that is, in the original graph, connected to at least two terminals. This is because the connected component is a tree of just Steiner nodes, and therefore in $\mathcal{T}^\prime$, there exists at least one Steiner node with at most degree one (zero if it is an isolated Steiner node). Clearly, based on Property~\ref{degreecon}, this Steiner node must be connected to at least two terminals in the original graph. Suppose without the loss of generality that in this connected component, this particular Steiner node is $s_1$ and it is connected to terminals $j_1$ and $j_2$. 

To prove that $a_1  = \hdots = a_q$ in the case where $s_1$ is not isolated, without loss of generality, let $s_1$ be connected to $s_2$. The relevant summands in \eqref{eq:KLsum} to recall are those $L_{ij}$ associated with edges $(s_i, s_j) \in \mathcal E_S$, such that $i,j = 1,\hdots,q$. 
In particular, the first summand is 
\begin{equation}\label{eq:L12}
    L_{12}=\begin{bmatrix}
      A_{12}&-A_{12}&\dots& 0\\
 -A_{12}&A_{12} &\dots&0\\
 0&0&\dots&0\\
 \vdots&\vdots&&\vdots
    \end{bmatrix},
\end{equation}
where $A_{12}$ is as given in \eqref{eq:A_ml} with $s_m = \bar s_1$ and $s_\ell = \bar s_2$. From the fact that the Hessian is the sum of nonnegative definite matrices, it follows that if $a = [a_1^{\top}\;a_2^{\top}\dots a^{\top}_k]^{\top}$ is a nullvector of the Hessian, it is necessarily a nullvector of $L_{12}$.  If $a$ is a nullvector, then \eqref{eq:L12} implies $a_2=a_1$. Following a similar approach, we can leverage the expression of $L_{s_is_j}$ to show that $a_i = a_j$ for all $(i,j)\in \mathcal E_S$ and $i,j =1,2,\hdots, q$. Thus, we have $a_1 = a_2 = \hdots = a_q$.

Now we show that $a_1 = 0$ when evaluated at $\bar t, \bar s$, and therefore each $a_1, \hdots, a_q$ is the zero vector. The first sum on the right-hand-side of \eqref{eq:KLsum} contains $K_{j_11}$ and $K_{j_21}$, which are both nonzero in the same $2\times 2$ block. 
More precisely, define
\begin{align}
    & B_{1}  = \frac{1}{\Vert p_{\bar t_{j_1}} - p_{\bar s_1}\Vert} \left(I_2 - \frac{(p_{\bar s_1} - p_{\bar t_{j_1}})(p_{\bar s_1} - p_{\bar t_{j_1}})^\top}{\Vert p_{\bar t_{j_1}} - p_{\bar s_1} \Vert^2}\right) \nonumber \\
    & \quad + \frac{1}{\Vert p_{\bar t_{j_2}} - p_{\bar s_1}\Vert} \left(I_2 - \frac{(p_{\bar s_1} - p_{\bar t_{j_2}})(p_{\bar s_1} - p_{\bar t_{j_2}})^\top}{\Vert p_{\bar t_{j_2}} - p_{\bar s_1} \Vert^2}\right)
\end{align}
to obtain
\begin{equation}\label{eq:Kj1_big}
    \sum_{p=1,2} K_{j_p1}=\begin{bmatrix}
        B_1&0&\dots&0\\
        0&0&\dots&0\\
        \vdots&\vdots&&\vdots\\
        0&0&\dots&0
        \end{bmatrix}.
\end{equation}
Any nullvector of $(\sum_{p=1,2} K_{j_p1})$ satisfies
\begin{equation}
   \left ( \sum_{p=1,2} K_{j_p1} \right ) \cdot 
\begin{bmatrix}
a_1^{\top} \\ 0 \\
\vdots \\0
\end{bmatrix} = 
\begin{bmatrix}
0 \\
0\\
\vdots \\0
\end{bmatrix}.
\end{equation}
However, Property~\ref{anglecon} states that $p_{\bar s_1} - p_{\bar t_{j_1}}$ and $p_{\bar s_1} - p_{\bar t_{j_2}}$ are 120 degrees apart and thus linearly independent. Using Lemma~\ref{lem:building_block}, we conclude that $B_1$ is positive definite and $a_1 = 0$. This means that all entries $a_1=\hdots =  a_q$ corresponding to the first connected component are zero, and the block diagonal matrix corresponding to the first connected component is positive definite. 

As noted in the first paragraph of \textit{Part b}, the Hessian can be expressed as a block diagonal matrix where each block corresponds to a connected component. Since we can always find a reordering of nodes so that any connected component of $\mathcal T^\prime$ can be the first component, proving the first block is positive definite immediately proves that each block matrix corresponding to a connected component is positive definite, which completes the proof.
\hfill $\qed$
\end{pf}

We are in a position to present the main algorithm of this paper. 

\begin{thm}[First-order approximation theorem]\label{thm:algorithm} 
\textnormal{}\\ 
 Consider Problem~\ref{prob:no_obstabcles} with $\bar T$ as the specific set of terminal nodes, and $\bar S$ a specific set of Steiner nodes such that $\bar{\mathcal T} = (\bar T\cup \bar S, \mathcal E)$ is an ESMT. Suppose that $\delta t$ is sufficiently small such that the number of Steiner points, $\vert \bar S\vert$, is unchanged and the edge set of $\bar{\mathcal T}$ and ${\bar{\mathcal{T}}}_{\bar t + \delta t}$ are the same. 
    Then, the perturbation vector $\delta s$ which solves Problem~\ref{prob:no_obstabcles} is given by 
    \begin{align}\label{eq:changeS}
\delta s \approx - \left ( \frac{\partial ^2 J}{\partial s^2} (\bar t, \bar s) \right )^{-1}   \frac{\partial ^2 J}{\partial t \partial s} (\bar t, \bar s) \cdot \delta t
\end{align}
to the first order.
\end{thm}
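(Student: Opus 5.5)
The plan is to treat the Steiner point positions of the perturbed tree as an implicitly defined function of the terminal positions and then apply the implicit function theorem.

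\textbf{Step 1: Stationarity for a fixed topology.} Because the edge set $\mathcal E$ is assumed unchanged under the perturbation, the total length of both $\bar{\mathcal T}$ and ${\bar{\mathcal T}}_{\bar t+\delta t}$ is given by the same function $J(t,s)$ in \eqref{eq:generalJ}. For any ESMT with this topology, the Steiner positions minimise $J(t,\cdot)$ over $s$. By Property~\ref{anglecon}, every edge has positive length: the edges at a Steiner point meet at $2\pi/3$, so no Steiner point coincides with a neighbour. Hence $J$ is smooth ($C^\infty$) in a neighbourhood of $(\bar t,\bar s)$, and the first-order condition
\[
F(t,s):=\frac{\partial J}{\partial s}(t,s)=0
\]
holds both at $(\bar t,\bar s)$ and at $(\bar t+\delta t,\bar s+\delta s)$.

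\textbf{Step 2: Implicit function theorem.} The Jacobian of $F$ with respect to $s$ at $(\bar t,\bar s)$ is the Hessian $\frac{\partial^2 J}{\partial s^2}(\bar t,\bar s)$, which is nonsingular by Theorem~\ref{theorem:nonsingular}. The implicit function theorem therefore gives neighbourhoods $U\ni\bar t$ and $V\ni\bar s$ and a unique smooth map $g\colon U\to V$ with $g(\bar t)=\bar s$ and $F(t,g(t))=0$. Its derivative is
\[
Dg(\bar t)=-\left(\frac{\partial^2J}{\partial s^2}(\bar t,\bar s)\right)^{-1}\frac{\partial^2J}{\partial t\,\partial s}(\bar t,\bar s).
\]
Taylor's theorem then gives $g(\bar t+\delta t)=\bar s+X\delta t+o(\Vert\delta t\Vert)$, with $X=Dg(\bar t)$ as in \eqref{eq:changeS}.

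\textbf{Step 3: Identifying the perturbed Steiner points with $g$.} This is the main obstacle: we must show that the perturbed Steiner positions $\bar s+\delta s$ actually lie in $V$, so that uniqueness forces $\bar s+\delta s=g(\bar t+\delta t)$.

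I would first use convexity. Each summand of $J(t,\cdot)$ is a norm of an affine function of $s$, so $J(t,\cdot)$ is convex. At $(\bar t,\bar s)$ its Hessian is positive definite by Theorem~\ref{theorem:nonsingular}, and by continuity it remains positive definite near $(\bar t,\bar s)$. Consequently $J(\bar t+\delta t,\cdot)$ has at most one stationary point, its strict minimiser, within any convex neighbourhood where the edge lengths stay nonzero.

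I would then invoke the standard continuity of Steiner points under a fixed topology: the Melzak construction, cited in the paper, depends continuously on the terminals. This shows $\delta s\to0$ as $\delta t\to0$, so for sufficiently small $\delta t$ we have $\bar s+\delta s\in V$. Alternatively, I could simply interpret the hypothesis ``sufficiently small'' as including this continuity.

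Combining Steps 1–3 gives $\delta s=X\delta t+o(\Vert\delta t\Vert)$, which is the statement of the theorem.
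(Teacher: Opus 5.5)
Your proposal is correct and follows the same route as the paper: use the stationarity condition $\frac{\partial J}{\partial s}=0$ at both the original and perturbed trees, differentiate it with respect to $t$, and invert the Hessian using Theorem~\ref{theorem:nonsingular}. The paper simply presumes that $s$ is a differentiable function of $t$ and applies the chain rule, whereas your implicit function theorem argument and Step~3 make this rigorous; for Step~3, convexity alone suffices, since for small $\delta t$ the implicit branch $g(\bar t+\delta t)$ is a strict local, hence unique global, minimiser of the convex function $J(\bar t+\delta t,\cdot)$, so it must coincide with the perturbed Steiner positions without any appeal to the Melzak construction.
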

\bigskip

Before we present the proof of Theorem~\ref{thm:algorithm}, we make note of some aspects of the theorem. A key assumption in Theorem~\ref{thm:algorithm} is that the change $\delta t$ is sufficiently small so that the number $k$ of Steiner points and the edges $\mathcal E$ in the resulting minimal Steiner tree remains the same. This can be a strong assumption considering even small changes may result in a ``reconfiguration" of the tree in which either the number of Steiner nodes or the edges change. As we will show with numerical examples in Section~\ref{sec:examples}, our approach works well when there are no reconfigurations, even if the perturbation is substantial and large. This is achieved by replacing a single large perturbation by a sequence of small perturbations whose aggregate corresponds to the large perturbation.

\begin{pf}[Theorem~\ref{thm:algorithm}]
The definition of \eqref{eq:generalJ} implies that for a given $t = \bar t$, the location vector $s = \bar s$ must be a minimiser of $J(s,\bar t)$. Therefore,
\begin{equation}\label{eq:partialJs_0}
    \frac{\partial J}{\partial s} ( \bar{t}, \bar{s}) = 0.
\end{equation}
Moreover, if $\bar s + \delta s$ is the vector for the new Steiner point locations associated with the perturbed terminal locations $\bar t + \delta t$, then equally there must hold:
\begin{equation}
    \frac{\partial J( \bar{t} + \delta \bar{t}, \bar{s} + \delta \bar{s})}{\partial s} = 0.
\end{equation}
We shall make use of this to establish \eqref{eq:changeS}.

\textit{First-Order Condition}: Let $\bar t$ be fixed but arbitrary. We consider $s =  s (\bar t)$ as a function of $\bar t$, and suppose $\bar s$ is such that it minimises $J$ for a given $\bar t$. Since $\frac{\partial J}{\partial s} (\bar{t}, \bar{s} (\bar t )) = 0$ must hold for any value of $\bar t$, we differentiate both sides of \eqref{eq:partialJs_0} with respect to $t$ to obtain:
\begin{equation}
    D_t \left ( \frac{\partial J}{ \partial s} (\bar{t}, \bar{s}(\bar t)) \right )    = 0,
 \end{equation}
 where $D_t(f)$ is the total derivative of a vector-valued function $f$ with respect to the vector $t$. Using the chain rule yields
\begin{align}
       D_t \left ( \frac{\partial J}{ \partial s} (\bar{t}, \bar{s} (\bar t) ) \right ) &= 0\\
  \Leftrightarrow  \ \ \ \   \frac{\partial ^2 J}{\partial s^2}(\bar{t}, \bar{s} ) \ D_t(s) + \frac{\partial ^2 J}{\partial t \partial s  }(\bar{t}, \bar{s} )  &= 0.
\end{align}
Theorem~\ref{theorem:nonsingular} establishes that $\frac{\partial ^2 J}{\partial s^2} (\bar t, \bar s)$ is invertible, and rearranging thus yields
\begin{align}
  D_t (s)  = - \left ( \frac{\partial ^2 J}{\partial s^2}(\bar{t}, \bar{s} ) \right )^{-1} \cdot   \frac{\partial ^2 J}{\partial t \partial s} (\bar{t}, \bar{s} ) .
\end{align}
\textit{Estimating the change in the Steiner points:}  With the above calculations we can now approximate the change in the Steiner point positions assuming the number of Steiner points and edge connections stay the same. For a small change in the terminal node positions $\delta t$, the corresponding change in the Steiner points $\delta s$ is approximately
\[
\delta s  \approx \left. D_t (s) \right\rvert_{(\bar t , \bar s)} \cdot \delta t ,
\]
which then leads to \eqref{eq:changeS} as claimed. \hfill $\qed$
\end{pf}

 Although the resulting network does not, in general, constitute an exact ESMT, it remains a close approximation of the true minimal structure. In particular, the increase in total length is small.

\section{Numerical Examples}\label{sec:examples}

In this section, we present several numerical examples in order to showcase the effectiveness of the proposed method, as scenarios in which it may struggle. All numerical values are reported to three decimal places.

\subsection{Example 1}

We begin with a three-terminal example, for which one can analytically prove that when the terminals  are positioned such that they form a triangle with interior angles of  less than $120^\circ$, there is precisely one Steiner point that is connected to each of the terminals and satisfies Properties~\ref{degreecon} and \ref{anglecon}. Suppose that the terminals are $T = \{t_1, t_2, t_3\}$ with $  p_{t_1} = [0,0]^\top, p_{t_2} = [1,0]^\top$, and $p_{t_3} = [0,1]^\top$. For these terminals, the Steiner point in the Steiner minimum tree has the coordinates $\bar s \approx [0.211,0.211]^\top$.

\begin{figure*}[!htbp]
\centering
\begin{minipage}{0.3\textwidth}
    \centering
    \includegraphics[width=\textwidth]{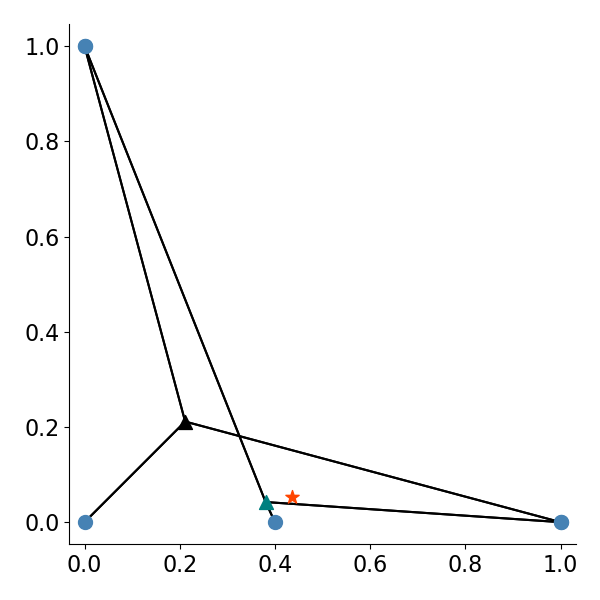}
    \subcaption{}
    \label{fig:shiftedSteiner_0.4_oneStep}
\end{minipage}%
\hspace{0.2\textwidth}
\begin{minipage}{0.3\textwidth}
    \centering
    \includegraphics[width=\textwidth]{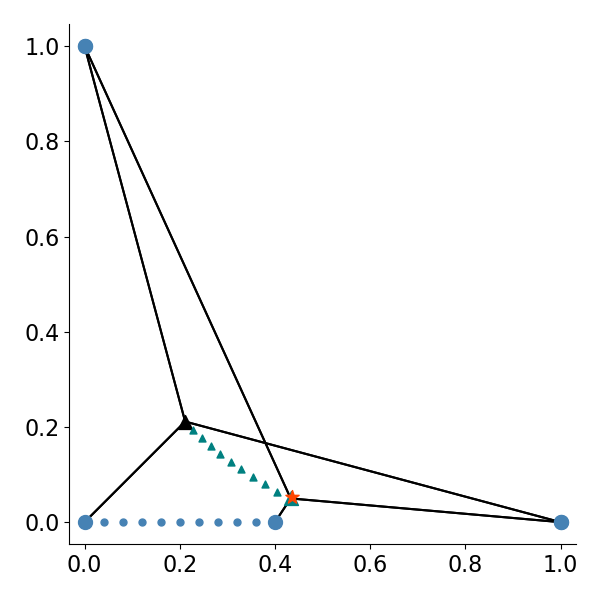}
    \subcaption{}
    \label{fig:shiftedSteiner_0.4}
\end{minipage}
\caption{Our approach for a 3-terminal problem with a large perturbation. Panel (a) shows a single-step calculation, while Panel (b) shows the same problem using an iterated stepwise shift with 10 steps. See Section~\ref{ssec:stepwise_example} for a description of the different nodes.}
\label{fig:shiftedSteiner}
\end{figure*}

Now, we consider a small displacement of $t_1$  so that $p_{t_1} + p_{\delta t_1} = [0.1, 0]^T$, and therefore $\delta t = [0.1, 0, 0, 0, 0, 0]^\top$. We calculate the new approximate Steiner point position using the algorithm given in Theorem~\ref{thm:algorithm}. 
The objective function in this case is $J\colon \mathbb{R}^6 \times  \mathbb{R}^2 \rightarrow \mathbb{R}$ with
\begin{align}
J (t,s) =  \sum_{j=1}^{3} \lVert p_{t_j} - p_s \rVert . 
\end{align}

For the calculations that follow, values will be rounded to three decimal places.
Using~\eqref{eq:partial2Js2_general} to \eqref{eq:Jij_block}, we get
\begin{align}
\frac{\partial^2 J}{\partial s^2} (\bar t, \bar s)
&=
\begin{bmatrix}
2.898 & -1.061 \\
-1.061 & 2.898
\end{bmatrix}
\end{align}

The mixed partial derivative $\frac{\partial^2 J}{\partial t \partial s} (\bar t, \bar s)$ is given by:
\begin{align}
  &\frac{\partial^2 J}{\partial t \partial s} (\bar t, \bar s)  \\
   &\approx
  \begin{pmatrix}
        -1.673 & 1.673 &  -0.082 & -0.306  &  -1.143& -0.306 \\
        1.673 &-1.673 & -0.306 & -1.143 & -0.306 & -0.082
  \end{pmatrix} \nonumber
 \end{align}
We now determine the approximate change in $\bar s $ when $\delta\bar t= [0.1 \  0 \ 0\ 0\ 0\ 0]^\top$:
\begin{align}\label{eq:calculateS}
\delta \bar s &\approx - \left ( \frac{\partial ^2 J}{\partial  s^2} (\bar t, \bar s) \right )^{-1} \cdot   \frac{\partial ^2 J}{\partial  t \partial  s} (\bar t, \bar s) \cdot \delta \bar t  \nonumber \\
    &\approx 
        \begin{pmatrix}
        0.423 & -0.423 & 0.077 & 0.289  & 0.500 & 0.134 \\
        -0.423 & 0.423 & 0.134  & 0.500 & 0.289 & 0.077
        \end{pmatrix}
        \cdot \delta \bar t
\end{align}

We obtain $\delta s = [0.0423, -0.0423]^\top$, and therefore 
\begin{align}
    \bar s + \delta \bar s 
    =
       \begin{pmatrix}
        0.254  \\
        0.169
    \end{pmatrix} 
\end{align}

For comparison, when we analytically calculate the position of the Steiner point directly from the new terminal node set, there results $\bar s + \delta \bar s= [0.257, 0.169]^\top$, illustrating the fact that our approximation is accurate.

\subsection{Example 2: Stepwise shift}\label{ssec:stepwise_example}

\begin{figure*}[!htbp]
\centering
\begin{minipage}{0.3\textwidth} 
    \centering
    \includegraphics[width=\textwidth]{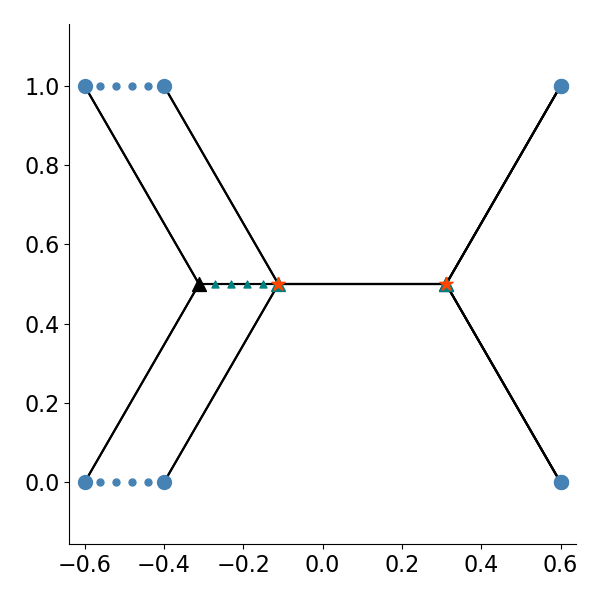}
    \subcaption{}
    \label{fig:fourPoints2}
\end{minipage}%
\hspace{0.2\textwidth} 
\begin{minipage}{0.3\textwidth} 
    \centering
    \includegraphics[width=\textwidth]{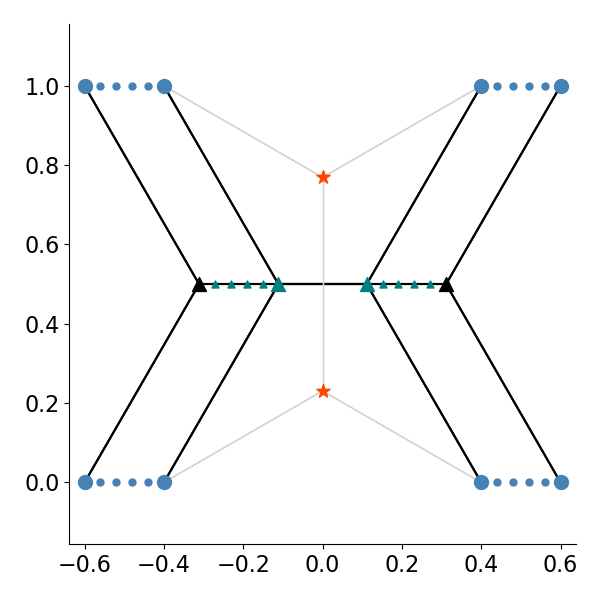}
    \subcaption{}
    \label{fig:fourPoints}
\end{minipage}
\caption{Four terminal example. Panel (a) shows that when the left terminals are shifted, the algorithm continues to perform well. Panel (b) shows that when the right terminals are also shifted, the algorithm breaks down, as seen by the substantial difference between the actual Steiner point location (red stars) and our approximated solution (blue triangles).}
\label{fig:fourPointsComparison}
\end{figure*}

For the same three-terminal setup in \textit{Example 1}, let us now consider a sizeable move of terminal $t_1$, with $\delta t = [0.4, 0, 0, 0, 0, 0]^\top$. The actual location of the new Steiner point for the shifted terminals $t'_1 = [0.4, 0]^\top$, $t'_2 = [1, 0]^\top$, $t'_3=[0, 1]^\top$ was $[0.437, 0.052]^\top$. Applying the algorithm in Theorem~\ref{thm:algorithm} yields the approximate Steiner point position for the perturbed terminal positions as $\bar s + \delta{\bar s} = [0.380, 0.042]^\top$. Figure~\ref{fig:shiftedSteiner_0.4_oneStep} illustrates the original Steiner tree and the perturbed tree with the approximate and true Steiner point positions. The three outer blue circles are the terminals, and the black triangle is the original Steiner point. The blue circle at $[0.4,0]$ is the shifted terminal $t_1$, and the red star indicates the actual new Steiner node location. The blue triangle is the approximate new Steiner node location obtained using our algorithm.

Suppose instead that we split the single $0.4$ perturbation of $x_{t_1}$ into a sequence of ten smaller perturbations of $0.04$. In this case, we use our proposed algorithm iteratively. In each iteration, we shift $\delta x_{t_1}$ by $0.04$  and obtain a new $\delta s$. Then, we update $\bar s \leftarrow \bar s + \delta s$ for the next iteration, and repeat. With this approach we arrive at the final approximate Steiner point having position $p_s = [0.433, 0.050]^\top$. Figure~\ref{fig:shiftedSteiner_0.4} shows the result of the iterative stepwise approach, with the smaller circle and triangle (partially overlapping in the figure) corresponding to the iteratively obtained positions of the terminal and the approximate Steiner point, very close to the true point.

\subsection{Limits of the approximation theorem}
As we have seen in the previous section, the stepwise application of the approximation theorem works well for an example where the perturbation in the terminals does not affect the connections in the Steiner minimum tree. In the following example, there are four terminals to be connected (see Fig.~\ref{fig:fourPoints}). When shifting the two left terminals towards the centre, the stepwise approximation performs well as shown in Figure~\ref{fig:fourPoints2}, as shown by the proximity of the blue triangles (approximate Steiner points) and the red stars (actual new Steiner points). Shifting the two terminals on the left toward the centre in five steps results in an updated position of the left Steiner point at $p_{s_1} = [-0.11132486, 0.5 ]^\top$, while the true minimising Steiner point location would be $p_{s_1} = [-0.11132487,  0.5 ]^\top$. However, if we additionally shift the right-side terminals towards the centre in the same way (see Fig.~\ref{fig:fourPoints}) , the actual Steiner minimum tree connects the terminals vertically instead of horizontally, as illustrated by the different location of the red stars (the actual new Steiner points). While the total lengths of the approximated and the minimal Steiner tree solution are still relatively close for small changes, the proposed stepwise method cannot find the new topology of the Steiner minimum tree solution. This is shown in Figure~\ref{fig:fourPoints}, where the gray lines show the edges of the actual Steiner minimum tree, while the approximated Steiner tree solution identified by our approach has black lines.

\section{Conclusion and future work}
In conclusion, this study explored the recovery of a Euclidean Steiner minimum tree following perturbations to terminal node positions using a first-order approximation theory. The mathematical groundwork laid here paves the way for future research into more complex scenarios involving perturbations, where the computational complexity of solving the problem from scratch provides a strong motivation to pursue our approach. In particular, we aim to extend our approach to consider the presence of soft obstacles, with a cost penalty for traversing, arising in pipeline networks or transmission lines through irregular terrain. As a first step, we consider a very soft obstacle as a perturbation of the zero obstacle problem studied in this paper, and then dealing with increasing obstacle strength through a stepwise approach similar to Section~\ref{sec:examples}. Additional directions include consideration of hard obstacles which cannot be traversed, and trees in non-planar $2D$ surfaces or in $3D$ space.

\bibliography{ifacconf}             

\end{document}